\newtheorem{lemma}{Lemma}
\newtheorem{proposition}{Proposition}
\newtheorem{corollary}{Corollary}
\theoremstyle{remark}
\newtheorem*{remark}{Remark}
\algnewcommand\True{\textrm{TRUE}}
\algnewcommand\False{\textrm{FALSE}}
\algnewcommand\LeftmostOne{\textrm{LeftmostOne}}
\algnewcommand\Degree{\textrm{Degree}}
\algnewcommand\Diag{\textrm{Diag}}
\algnewcommand\Swap{\textrm{Swap}}
\newcommand{\SLIF}[2]{\State \algorithmicif\ {#1}\ \algorithmicthen\ {#2}}
\newcommand\Algphase[1]{%
\vspace*{-0.7\baselineskip}\Statex\hspace*{\dimexpr-\algorithmicindent-2pt\relax}\rule{\columnwidth}{0.4pt}%
\vspace*{-0.8mm}\Statex\hspace*{-\algorithmicindent}{#1}%
\vspace*{-0.3\baselineskip}\Statex\hspace*{\dimexpr-\algorithmicindent-2pt\relax}\rule{\columnwidth}{0.4pt}%
}
\begin{document}
\title{Binary Systematic Network Coding\\for Progressive Packet Decoding}

\author{\IEEEauthorblockN{Andrew L. Jones, Ioannis Chatzigeorgiou and Andrea Tassi}
\IEEEauthorblockA{School of Computing and Communications, Lancaster University, United Kingdom\\
Email: \{a.jones2, i.chatzigeorgiou, a.tassi\}@lancaster.ac.uk}}

\maketitle

\begin{abstract}
We consider binary systematic network codes and investigate their capability of decoding a source message either in full or in part. We carry out a probability analysis, derive \mbox{closed-form} expressions for the decoding probability and show that systematic network coding outperforms conventional network coding. We also develop an algorithm based on Gaussian elimination that allows progressive decoding of source packets. Simulation results show that the proposed decoding algorithm can achieve the theoretical optimal performance. Furthermore, we demonstrate that systematic network codes equipped with the proposed algorithm are good candidates for progressive packet recovery owing to their overall decoding delay characteristics.
\end{abstract}

\vspace{1mm}
\begin{IEEEkeywords}
Network coding, Gaussian elimination, decoding probability, rank-deficient decoding.
\end{IEEEkeywords}


\section{Introduction}
\label{sec:intro}

Network coding (NC), originally proposed in \cite{Ahlswede00}, has the potential to significantly improve network reliability by mixing packets at a source node or at intermediate network nodes prior to transmission. The classical implementation of NC, which is often referred to as \textit{straightforward NC} \cite{Zhang06}, randomly combines source packets using finite field arithmetic. As the size of the field increases, the likelihood of the transmitted packets being linearly independent also increases. However, the decoding process at the receiver is computationally expensive, especially if the field size is large. Furthermore, straightforward NC incurs a substantial decoding delay because source packets can be recovered at the receiver only if the received network-coded packets are at least as many as the source packets.

Heide \textit{et al.}~\cite{Heide09} proposed the adoption of binary systematic NC, which operates over a finite field of only two elements, as a means of reducing the decoding complexity of straightforward NC. A source node using systematic NC first transmits the original source packets and then broadcasts linear combinations of the source packets. The reduction in decoding complexity at the receiver decreases energy consumption and makes systematic NC suitable for energy-constrained devices, such as mobile phones and laptops. Lucani \textit{et al.}~\cite{Lucani10} developed a Markov chain model to show that the decoding process of systematic NC in time division duplexing channels requires considerably fewer operations, on average, than that of straightforward NC. Barros \textit{et al.} \cite{Barros09} and Prior and Rodrigues~\cite{Prior11} observed that opting for systematic NC as opposed to straightforward NC reduces decoding delay without sacrificing throughput. Therefore, systematic network codes exhibit desirable characteristics for multimedia broadcasting and streaming applications. More recently, Saxena and V\'{a}zquez-Castro \cite{Saxena13} discussed the advantages of systematic NC for transmission over satellite links.

As in~\cite{Heide09}, we also consider binary systematic network codes and investigate their potential in delivering services, such as multimedia and streaming, which often require the progressive recovery of source packets and the gradual refinement of the source message. Our objective is to prove that systematic NC not only exhibits a lower decoding complexity than straightforward NC, as shown in~\cite{Lucani10}, but also a better performance, as observed in~\cite{Barros09}. Even though our focus is on binary systematic NC, we explain that our analysis can be easily extended to finite fields of larger size. In addition, we develop a decoding algorithm and propose a framework, which helps us study the performance of systematic NC in terms of the probability of recovering a source message either in part or in full.  

The rest of the paper has been organised as follows. Section~\ref{sec:sys_net_coding} analyses the performance of systematic NC and introduces metrics for evaluating its capability of progressively recovering source messages. Section~\ref{sec:prog_decoding} proposes a modification to the Gaussian elimination algorithm that allows source packets to be progressively decoded. Section~\ref{sec:results} discusses the computational cost and accuracy of the proposed decoding algorithm, validates the derived theoretical expressions and contrasts the performance of systematic NC with that of benchmark transmission schemes. The main contributions of the paper are summarised in Section~\ref{sec:conclusions}.


\section{Binary Systematic Network Coding}
\label{sec:sys_net_coding}

Let us consider a source node, which segments a message $\mathbf{s}=[\,s_i\,]_{i=1}^{K}$ into $K$ source packets and encodes them using a systematic NC encoder. The encoder generates and transmits $N$ packets, which comprise $K$ systematic packets followed by \mbox{$N-K$} coded packets. The systematic packets are identical to the source packets, while the coded packets are obtained by linearly combining source packets. The $n$-th transmitted packet, denoted by $t_n$, can be expressed as follows
\begin{equation}
t_n = \begin{cases}  
\:s_n &\text{if $n \leq K$} \vspace{1mm}\\
\:\displaystyle\sum_{i=1}^{K}g_{n,i}\:s_{i} &\text{if $K<n\leq N$} 
\end{cases}
\label{eq.encoded_SysNC}
\end{equation}
where $g_{n,i}$ is a binary coefficient chosen uniformly at random from the elements of the finite field $\textrm{GF}(2)$. We can also express $t_n$ in matrix notation as $t_n=\mathbf{G}[\,n\,]\cdot\mathbf{s}^\intercal$, where $\mathbf{G}[\,n\,]=[\,g_{n,i}\,]_{i=1}^K$ is the coding vector associated with $t_n$. Note that when $n\leq K$, in line with the definition of binary systematic NC, we set $g_{n,i}=1$ if $i=n$, else $g_{n,i}=0$.

In the remainder of this section, we investigate the theoretical performance of systematic NC and derive analytical expressions for the probability of decoding the entire source message or a fraction of the source message. We also present performance metrics and benchmarks for the evaluation of systematic NC for progressive packet recovery.

\subsection{Probability of Decoding the Entire Source Message}
\label{subsec:FullProb}

As previously mentioned, a source node using systematic NC transmits $N$ packets, of which $K$ are systematic and the remaining \mbox{$N-K$} are coded. Assume that a receiver successfully recovers \mbox{$r$} packets, of which \mbox{$h$} are systematic and \mbox{$r-h$} are coded. The coding vectors of the $r$ received packets are stacked to form the $r\times K$ \textit{decoding matrix} $\mathbf{G}$. 


Let $f_{K}(r,N)$ denote the probability of decoding the $K$ source packets given that $r$ packets have been received. We understand that $f_{K}(r,N)$ is non-zero only if $K\leq r\leq N$. The value of $r$ also determines the smallest allowable value of $h$. For instance, if \mbox{$K\leq N<2K$}, the $N-K$ transmitted coded packets are fewer than the $K$ transmitted systematic packets; given that $r\geq K$ packets are received, the number of received systematic packets $h$ should be at least \mbox{$r\!-\!(N\!-\!K)$}. Otherwise, if \mbox{$N\geq 2K$}, the smallest value of $h$ can be zero. Therefore, $h$ is defined in the range \mbox{$\max{(0,\,r\!-\!N\!+\!K)}\leq h\leq K$}. Having defined the parameters of the system model and their interdependencies, we can now proceed with the derivation of an analytical expression for $f_{K}(r,N)$.

\begin{lemma}
\label{lemma.cond_prob_q2}
For \mbox{$N\geq K$} transmitted packets, the probability of a receiver decoding all of the $K$ source packets, given that \mbox{$K\leq r\leq N$} packets have been successfully received, is
\begin{equation}
\label{eq.cond_prob_q2}
f_{K}(r,N)\!=\!\frac{\binom{N-K}{r-K}\!+\!\!\!\!{\displaystyle\sum_{h = h_{\min}}^{K-1}}\!\!\!\binom{K}{h}\!\binom{N-K}{r-h}\!\!\!{\displaystyle\prod_{j = 0}^{K-h-1}}\!\!\!\!\left(1-2^{-r+h+j}\right)}{\textstyle\binom{N}{r}}\!
\end{equation}where $h_{\min}=\max{(0,\,r\!-\!N\!+\!K)}$.
\end{lemma}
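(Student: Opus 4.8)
The plan is to express full decoding as the event that the $r \times K$ decoding matrix $\mathbf{G}$ attains full column rank $K$, and then to compute its probability by conditioning on the number $h$ of received systematic packets. Since the receiver obtains an unordered size-$r$ subset of the $N$ transmitted packets uniformly at random, and exactly $K$ of those packets are systematic while $N-K$ are coded, the number of received systematic packets follows a hypergeometric law: the probability of receiving $h$ systematic and $r-h$ coded packets is $\binom{K}{h}\binom{N-K}{r-h}\big/\binom{N}{r}$, which accounts for the denominator $\binom{N}{r}$ and the binomial factors in the numerator. The admissible range of $h$ is forced by $0 \le h \le K$ together with $0 \le r-h \le N-K$, giving precisely $h_{\min} = \max(0,\,r-N+K)$.

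Next I would evaluate the conditional probability of full decoding given $h$. The $h$ systematic rows are distinct standard basis vectors $\mathbf{e}_{i}$, so they immediately reveal $h$ of the source symbols and contribute $h$ pivots. Using these rows to eliminate the corresponding $h$ columns from the $r-h$ coded rows leaves, on the remaining $K-h$ coordinates, an $(r-h)\times(K-h)$ submatrix whose entries are exactly the coefficients $g_{n,i}$ of the coded packets on the not-yet-known positions. Full column rank of $\mathbf{G}$ is then equivalent to this submatrix having rank $K-h$. The key observation, and the step I would treat most carefully, is that elimination by unit vectors only alters the eliminated columns and therefore does not touch the $K-h$ remaining columns; since every $g_{n,i}$ is an independent uniform $\mathrm{GF}(2)$ symbol, this submatrix is a uniformly random $(r-h)\times(K-h)$ binary matrix whose distribution depends only on $h$ and $r$, not on which specific packets were received.

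I would then invoke the classical count of full-rank binary matrices: the number of $(r-h)\times(K-h)$ matrices over $\mathrm{GF}(2)$ of full column rank $K-h$ is $\prod_{j=0}^{K-h-1}\bigl(2^{r-h}-2^{j}\bigr)$, obtained by choosing each successive column outside the span of the previously selected ones. Dividing by the total count $2^{(r-h)(K-h)}$ yields the conditional success probability $\prod_{j=0}^{K-h-1}\bigl(1-2^{-r+h+j}\bigr)$, which is well defined because $r\ge K$ guarantees $r-h\ge K-h$. Multiplying by the hypergeometric weight and summing over $h$ from $h_{\min}$ to $K$ gives $f_{K}(r,N)$ as a single sum. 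Finally, I would separate the $h=K$ term, for which the empty product equals $1$ and the binomial factors reduce to $\binom{N-K}{r-K}$, thereby recovering the stated form in which this term appears outside the summation and the sum runs only up to $K-1$. The main conceptual obstacle is justifying the preservation of uniform randomness under the systematic-row elimination; once that is established, the remainder is a routine assembly of the hypergeometric weighting and the standard rank formula.
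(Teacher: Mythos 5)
Your proof is correct and follows essentially the same route as the paper: conditioning on the number $h$ of received systematic packets with the hypergeometric weight $\binom{K}{h}\binom{N-K}{r-h}/\binom{N}{r}$, multiplying by the probability that the residual $(r-h)\times(K-h)$ coded submatrix has full column rank, and splitting off the $h=K$ term. The only difference is that you derive the factor $\prod_{j=0}^{K-h-1}(1-2^{-r+h+j})$ from the full-rank matrix count (and carefully justify that eliminating with the systematic rows leaves a uniformly random submatrix), whereas the paper simply cites this expression from the straightforward NC literature.
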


\begin{proof}
The decoding probability $f_{K}(r,N)$ can be decomposed into the sum of the following probabilities
\begin{equation}
\label{eq.sum_two_terms}
f_{K}(r,N)=\mathbb{P}\{h\!=\!K\}\,+\!\!\sum_{h=h_{\min}}^{K-1}\!\!\!\mathbb{P}\{h\!<\!K\}\; w_{K-h}(r-h).
\end{equation}
The term $\mathbb{P}\{h=K\}$ represents the probability of recovering the $K$ source packets directly from the $K$ successfully received systematic packets. This is the case when \mbox{$r-K$} out of the \mbox{$N-K$} coded packets have been successfully delivered to the receiver along with the \mbox{$K$} systematic packets. Considering that $r$ out of the $N$ transmitted packets have been received, we can deduce that $\mathbb{P}\{h=K\}$ is given by
\begin{equation}
\label{eq.term1}
\mathbb{P}\{h\!=\!K\} = \frac{\binom{N-K}{r-K}}{\binom{N}{r}}. 
\end{equation}
The sum of products in \eqref{eq.sum_two_terms} considers the probability of recovering \mbox{$h<K$} systematic packets and decoding the remaining \mbox{$K-h$} source packets from the \mbox{$r-h$} received coded packets. More specifically, the probability $\mathbb{P}\{h\!<\!K\}$ of receiving $h$ out of the $K$ systematic packets and \mbox{$r-h$} out of the \mbox{$N-K$} coded packets is equal to
\begin{equation}
\label{eq.term2.1}
\mathbb{P}\{h<K\} = \frac{\binom{K}{h}\binom{N-K}{r-h}}{\binom{N}{r}}. 
\end{equation}
On the other hand, the probability of having \mbox{$K-h$} linearly independent coded packets among the \mbox{$r-h$} received ones can be obtained from the literature of straightforward NC, for example \cite{Trullols}. We find that
\begin{equation}
\label{eq.term2.2}
w_{K-h}(r-h)=\prod_{j = 0}^{K-h-1}\left(1-2^{-(r-h)+j}\right).
\end{equation}
Substituting \eqref{eq.term1}, \eqref{eq.term2.1} and \eqref{eq.term2.2} into \eqref{eq.sum_two_terms} gives \eqref{eq.cond_prob_q2}. This concludes the proof.
\end{proof}

\begin{proposition}
\label{prop.DecAllK}
The probability of a receiver decoding all of the $K$ source packets, after the transmission of $N\geq K$ packets over a channel characterized by a packet erasure probability~$p$, can be expressed as follows
\begin{equation}
\label{eq.P_full}
\mathrm{P}_{K}(N)=\sum_{r=K}^{N}{\textstyle \binom{N}{r}}\left(1-p\right)^{r}p^{N-r}f_{K}(r,N).
\end{equation}
\end{proposition}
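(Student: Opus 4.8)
The plan is to obtain $\mathrm{P}_{K}(N)$ by conditioning on the number $r$ of packets that survive the erasure channel, and then invoking Lemma~\ref{lemma.cond_prob_q2} for the conditional decoding probability. The starting observation is that each of the $N$ transmitted packets is erased independently with probability $p$, so the number of received packets is a binomial random variable: the probability that exactly $r$ packets arrive equals $\binom{N}{r}(1-p)^{r}p^{N-r}$.

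The key step is to recognise that $f_{K}(r,N)$, as derived in Lemma~\ref{lemma.cond_prob_q2}, is exactly the conditional probability of decoding all $K$ source packets given that $r$ packets have been received. This requires one consistency check: the combinatorial weights $\binom{N-K}{r-K}$ and $\binom{K}{h}\binom{N-K}{r-h}$ appearing in \eqref{eq.cond_prob_q2} presuppose that every size-$r$ subset of the $N$ transmitted packets is equally likely to be received. Under independent Bernoulli erasures this is automatically satisfied, since conditioning on the event that exactly $r$ packets survive renders the surviving set uniformly distributed over all $\binom{N}{r}$ possibilities. Hence $f_{K}(r,N)$ is precisely the factor by which the binomial mass at $r$ must be weighted.

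With these two ingredients, the law of total probability gives
\begin{equation*}
\mathrm{P}_{K}(N)=\sum_{r=0}^{N}{\textstyle\binom{N}{r}}\left(1-p\right)^{r}p^{N-r}\,f_{K}(r,N).
\end{equation*}
Finally I would invoke the observation made just before the statement of Lemma~\ref{lemma.cond_prob_q2}, namely that $f_{K}(r,N)$ is non-zero only for $K\leq r\leq N$, since decoding $K$ source packets from fewer than $K$ received packets is impossible. This lets me drop every term with $r<K$ and restrict the summation to the range $K\leq r\leq N$, which yields \eqref{eq.P_full} and completes the argument.

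I expect the only genuinely subtle point to be the uniformity argument in the second paragraph: one must verify that the independent-erasure channel model and the counting model underlying $f_{K}(r,N)$ are mutually compatible, so that $f_{K}(r,N)$ may legitimately be treated as a conditional probability. Everything else is a direct application of the total probability rule, so no heavy calculation is anticipated.
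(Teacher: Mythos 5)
Your proposal is correct and follows essentially the same route as the paper: weight the conditional probability $f_{K}(r,N)$ from Lemma~\ref{lemma.cond_prob_q2} by the binomial probability of receiving exactly $r$ of the $N$ packets, apply the law of total probability, and truncate the sum to $r\geq K$ since decoding is impossible otherwise. The uniformity check you flag is a worthwhile observation but does not change the argument, which matches the paper's (much terser) proof.
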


\begin{proof}
The proof follows from Lemma \ref{lemma.cond_prob_q2}. The conditional probability $f_{K}(r,N)$ has been weighted by the probability of successfully receiving $r$ out of $N$ transmitted packets and averaged over all valid values of $r$.
\end{proof}

The closed-form expressions for the decoding probability of systematic network codes can be used to contrast their performance to the performance of straightforward network codes and give rise to the following proposition.

\begin{proposition}
\label{prop.sysNC}
Systematic network codes exhibit a higher probability of decoding all of the $K$ packets of a source message than straightforward network codes.
\end{proposition}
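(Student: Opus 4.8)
The plan is to compare the two schemes under an identical transmission budget: both a systematic and a straightforward encoder send $N$ packets over the same erasure channel, so the number $r$ of received packets obeys the same binomial law and enters with the same weight $\binom{N}{r}(1-p)^{r}p^{N-r}$ as in \eqref{eq.P_full}. It therefore suffices to prove pointwise domination of the conditional decoding probabilities for every $K\le r\le N$. For straightforward NC each received packet is a uniformly random coding vector, so its conditional decoding probability is exactly $w_{K}(r)=\prod_{j=0}^{K-1}(1-2^{-r+j})$, i.e. the $h=0$ instance of \eqref{eq.term2.2}. The whole claim thus reduces to establishing $f_{K}(r,N)\ge w_{K}(r)$, strictly on a set of $r$ of positive weight.

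First I would recast $f_{K}(r,N)$ from \eqref{eq.cond_prob_q2} as a convex combination. Writing $\mathbb{P}\{h\}=\binom{K}{h}\binom{N-K}{r-h}/\binom{N}{r}$ and noting that the isolated term $\binom{N-K}{r-K}/\binom{N}{r}$ is precisely $\mathbb{P}\{h=K\}$ with the empty-product convention $w_{0}\equiv 1$, Lemma~\ref{lemma.cond_prob_q2} reads $f_{K}(r,N)=\sum_{h=h_{\min}}^{K}\mathbb{P}\{h\}\,w_{K-h}(r-h)$. Vandermonde's identity $\sum_{h}\binom{K}{h}\binom{N-K}{r-h}=\binom{N}{r}$ (the omitted low-$h$ terms vanishing because $\binom{N-K}{r-h}=0$ once $r-h>N-K$) shows the weights $\mathbb{P}\{h\}$ are non-negative and sum to one, so $f_{K}(r,N)$ is a genuine convex combination of the straightforward-NC quantities $w_{K-h}(r-h)$.

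The key step is then to show $w_{K-h}(r-h)\ge w_{K}(r)$ for every admissible $h$, strictly when $h\ge 1$. Re-indexing \eqref{eq.term2.2} yields $w_{K-h}(r-h)=\prod_{j=h}^{K-1}(1-2^{-r+j})$, which gives the telescoping factorisation $w_{K}(r)=\bigl[\prod_{j=0}^{h-1}(1-2^{-r+j})\bigr]\,w_{K-h}(r-h)$. Since $r\ge K\ge h$, every bracketed factor $1-2^{-r+j}$ with $0\le j\le h-1$ lies in $(0,1)$, so the bracket is at most one and is strictly below one as soon as $h\ge 1$. Averaging over the convex weights therefore gives $f_{K}(r,N)\ge w_{K}(r)$, and the inequality is strict because the term $h=K$ carries positive weight $\binom{N-K}{r-K}/\binom{N}{r}>0$ while satisfying $w_{0}(\cdot)=1>w_{K}(r)$.

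Finally I would feed the strict pointwise bound $f_{K}(r,N)>w_{K}(r)$ back through \eqref{eq.P_full}: for any non-trivial channel ($0<p<1$) every weight $\binom{N}{r}(1-p)^{r}p^{N-r}$ with $K\le r\le N$ is positive, so the pointwise domination integrates to $\mathrm{P}_{K}(N)$ for systematic NC strictly exceeding that of straightforward NC. I do not expect a serious analytic obstacle; the crux is structural, namely recognising that $f_{K}(r,N)$ is a convex combination of the straightforward-NC probabilities $w_{K-h}(r-h)$, each of which beats $w_{K}(r)$ by exactly the factor accounting for the erasure-free recovery of the $h$ received systematic packets. The only points demanding care are the edge case $K\le N<2K$, where $h_{\min}>0$ and one must confirm the omitted Vandermonde terms vanish, and the verification that all bracket factors remain in $(0,1)$, both of which follow from $r\ge K$.
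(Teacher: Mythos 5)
Your proof is correct and is essentially the paper's own argument in a different presentation: the paper divides $f_{K}(r,N)$ by $w_{K}(r)$ and bounds the resulting coefficients $A$ and $B_h$ below by $1$, and these coefficients are exactly your ratios $w_{K-h}(r-h)/w_{K}(r)$, with the same Vandermonde identity closing the argument. Your convex-combination framing (and handling both ranges of $N$ at once by noting the omitted binomial terms vanish) is a slightly cleaner packaging of the identical idea, so no substantive difference.
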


\begin{proof}
For the same number of received packets $r$, the probability of decoding all of the $K$ source packets is $f_{K}(r,N)$ for systematic NC and $w_{K}(r)$ for straightforward NC, where \mbox{$w_{K}(r)=\prod_{j = 0}^{K-1}\left(1-2^{-r+j}\right)$} as per \eqref{eq.term2.2}. If we show that the relationship \mbox{$f_{K}(r,N)\geq w_{K}(r)$} holds for all valid values of $N$, we can infer that the decoding probability of systematic NC is higher than that of straightforward NC. Dividing $f_{K}(r,N)$ by $w_{K}(r)$ gives
\begin{equation}
\label{eq.DecProbRatio}
\frac{f_{K}(r,N)}{w_{K}(r)}={\textstyle \binom{N}{r}^{\!-1}}\!\biggl[{\textstyle \binom{N-K}{r-K}}A\,+\!\!\sum_{h=h_{\min}}^{K-1}\!\!\!{\textstyle \binom{K}{h}\!\binom{N-K}{r-h}}B_h\biggr]
\end{equation}
where
\begin{equation}
\label{eq.DefAandB}
A \! = \!\!\prod_{j=0}^{K-1}\!\frac{2^{r-j}}{2^{r-j}-1}
\,\,\textrm{and}\,\,
B_h \! = \!\begin{cases}
1, &\!\!\!\textrm{for }h=0\vspace{1mm}\\
{\displaystyle \prod_{j=0}^{h-1}\!\frac{2^{r-j}}{2^{r-j}-1}}, &\!\!\!\textrm{for }h>0.
\end{cases}
\end{equation}
Note that $A>1$ and $B_h\geq1$ for all valid values of $r$, that is, $K\leq r\leq N$. Therefore, the right-hand side of \eqref{eq.DecProbRatio} can become a lower bound on the ratio $f_{K}(r,N)/w_{K}(r)$ if coefficients $A$ and $B_{h}$ are removed. More specifically, we can obtain
\begin{equation}
\label{eq.BoundOnDecProbRatio}
\frac{f_{K}(r,N)}{w_{K}(r)}>{\textstyle \binom{N}{r}}^{\!-1}
\sum_{h=h_{\min}}^{K}\!\!{\textstyle \binom{K}{h}}{\textstyle\binom{N-K}{r-h}}
\end{equation}if the binomial coefficient $\binom{N-K}{r-K}$ in \eqref{eq.DecProbRatio} is included into the sum and the upper limit of the sum is updated accordingly. We distinguish the following two cases for the value of $N$:
\begin{itemize}
\item $N\geq 2K$: In this case, we have $h_{\min}=0$. Invoking a special instance of the Chu-Vandermonde identity \mbox{\cite[\!p.\!\,\,41]{Koepf}}, we can reduce the sum at the right-hand side of \eqref{eq.BoundOnDecProbRatio} to
\begin{equation}
\label{eq.CaseNgr2K}
\sum_{h=0}^{K}{\textstyle \binom{K}{h}}{\textstyle\binom{N-K}{r-h}}={\textstyle \binom{N}{r}}.
\end{equation}
\item $K\!\leq\!N\!<\!2K$: As previously explained, \mbox{$h_{\min}\!=\!r\!-\!N\!+\!K$}. Setting \mbox{$h'\!=\!N\!-\!K\!-\!r\!+\!h$}, expressing the sum in \eqref{eq.BoundOnDecProbRatio} in terms of $h'$, exploiting the properties of binomial coefficients and using the widely-known Vandermonde's convolution \cite[\!p.\!\,\,29]{Roman} gives
\begin{equation}
\label{eq.CaseNls2K}
\begin{split}
\sum_{h=r-N+K}^{K}\!\!{\textstyle \binom{K}{h}}{\textstyle\binom{N-K}{r-h}}&=
\sum_{h'=0}^{N-r}{\textstyle \binom{K}{N-r-h'}}{\textstyle\binom{N-K}{h'}}\\
&={\textstyle \binom{N}{N-r\,}}={\textstyle \binom{N}{r}}.
\end{split}
\end{equation}
\end{itemize}
If we combine identities \eqref{eq.CaseNgr2K} and \eqref{eq.CaseNls2K} with inequality \eqref{eq.BoundOnDecProbRatio}, we obtain \mbox{$f_{K}(r,N)/w_{K}(r)>1$} for all valid values of $N$, which concludes the proof. We note that the ratio \mbox{$f_{K}(r,N)/w_{K}(r)$} approaches $1$ as the value of $N-K$ increases.
\end{proof}

\begin{remark}
Even though this paper is concerned with binary systematic NC, i.e. the elements of matrix $\mathbf{G}$ are selected uniformly at random from $\mathrm{GF}(2)$, the same reasoning can be employed to obtain $\mathrm{P}_{K}(N)$ when operations are performed over $\mathrm{GF}(q)$ for $q\geq 2$. The probability $f_{K}(r,N)$ of decoding the entire source message, given that $r$ packets have been received, can be written as
\begin{equation}
\label{eq.cond_prob_any_q}
f_{K}(r,N)\!=\!\frac{\binom{N-K}{r-K}\!+\!\!\!\!\!{\displaystyle\sum_{h = h_{\min}}^{K-1}}\!\!\!\!\binom{K}{h}\!\binom{N-K}{r-h}\!\!\!{\displaystyle\prod_{j = 0}^{K-h-1}}\!\!\!\!\!\left(1\!-\!q^{-r+h+j}\right)}{\textstyle\binom{N}{r}}\!.
\end{equation}
Both Propositions \ref{prop.DecAllK} and \ref{prop.sysNC} hold for $q\geq2$. Substituting \eqref{eq.cond_prob_any_q} into \eqref{eq.P_full} gives the general expression for $\mathrm{P}_{K}(N)$.
\end{remark}

\subsection{Probability of Decoding a Fraction of the Source Message}

In Section \ref{subsec:FullProb}, we focused on deriving the probability of decoding the $K$ source packets when \mbox{$N\geq K$} packets have been transmitted. Of equal interest is the probability of recovering at least \mbox{$M<K$} source packets when $N\geq M$ packets have been transmitted. To the best of our knowledge, a closed-form expression for this probability, denoted hereafter as $\mathrm{P}_{K,M}(N)$, has not been obtained for straightforward NC. However, a good approximation, which follows readily from Proposition \ref{prop.DecAllK}, can be computed for the case of systematic NC.


\begin{corollary}
The probability of recovering at least \mbox{$M<K$} source packets, when \mbox{$N\geq M$} packets have been transmitted over a channel with packet erasure probability $p$, can be approximated by
\begin{equation}
\label{eq.P_partial}
\mathrm{P}_{K,M}(N)\approx\sum_{r=M}^{N_{\min}}{\textstyle \binom{N_{\min}}{r}}\left(1-p\right)^{r}p^{N_{\min}-r}
\end{equation}where $N_{\min}=\min{(K,N)}$.
\end{corollary}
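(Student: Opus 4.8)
The plan is to reuse the reception-weighting idea of Proposition~\ref{prop.DecAllK}, but to replace the decoding probability by a far simpler event tied to the systematic packets alone. The key observation is that $N_{\min}=\min(K,N)$ is exactly the number of systematic packets the encoder places on the channel: by \eqref{eq.encoded_SysNC}, the first $\min(K,N)$ transmitted packets coincide with source packets, so when $N\leq K$ the whole transmission consists of systematic packets and no coded packet is sent, whereas when $N>K$ all $K$ systematic packets are sent ahead of the $N-K$ coded ones. In both regimes precisely $N_{\min}$ systematic packets are transmitted, and the corollary isolates their contribution.

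Next I would exploit the fact that each transmitted packet is erased independently with probability $p$, so each systematic packet arrives with probability $1-p$ and the number of received systematic packets is binomially distributed with parameters $N_{\min}$ and $1-p$. Since a received systematic packet $t_i=s_i$ delivers the source packet $s_i$ with no Gaussian elimination required, receiving at least $M$ systematic packets immediately recovers at least $M$ source packets. The probability of this event is the binomial tail $\sum_{r=M}^{N_{\min}}\binom{N_{\min}}{r}(1-p)^{r}p^{N_{\min}-r}$, which is exactly the right-hand side of \eqref{eq.P_partial}.

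It then remains to connect this quantity to $\mathrm{P}_{K,M}(N)$, and here the two regimes behave differently. For $N\leq K$ there are no coded packets, so direct reception of systematic packets is the only decoding mechanism and the expression is \emph{exact}. For $N>K$ the coded packets can supply additional recoveries through Gaussian elimination, and since receiving at least $M$ systematic packets is sufficient but not necessary for recovering at least $M$ source packets, the binomial tail becomes a lower bound on $\mathrm{P}_{K,M}(N)$; the neglected coded-packet contribution is precisely what turns the identity into an approximation. The main obstacle is this final step: arguing that the omitted contribution is genuinely small so that the bound is faithful rather than loose. I would reason that when only a fraction $M<K$ of the message is required the systematic packets already carry most of the probability mass of the recovery event, and I would defer the quantitative verification of the approximation's tightness to the simulation results of Section~\ref{sec:results}.
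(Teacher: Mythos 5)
Your proposal is correct and follows essentially the same route as the paper: both identify the $N_{\min}=\min(K,N)$ systematic packets as the dominant decoding mechanism and replace $\mathrm{P}_{K,M}(N)$ by the binomial tail probability of receiving at least $M$ of them, with the tightness of the approximation argued informally (small $p$) rather than quantified. Your added observations that the expression is exact for $N\leq K$ and a lower bound for $N>K$ are a slight sharpening of the paper's argument, but not a different approach.
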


\begin{proof}
The number of transmitted systematic packets is either $N$ if $N<K$, or $K$ if $N\geq K$. In general, $\min{(K,N)}$ systematic packets are sent over the packet erasure channel, for any value of $N$. If we wish to recover at least $M<\min{(K,N)}$ source packets and the erasure probability $p$ is small, $M$ or more received packets will most likely be systematic and, thus, linearly independent. As a result, the probability of decoding at least $M$ source packets reduces to the probability of recovering at least $M$ systematic packets, given by \eqref{eq.P_partial}.
\end{proof}

We remark that the assumption of a low value of $p$ is reasonable when the physical layer employs error correcting codes that improve the channel conditions as ``seen'' by higher network layers, where NC is usually applied. For example, the Long Term Evolution Advanced (LTE-A) framework considers an erasure probability of $p=0.1$ \cite{sesia}.

\subsection{Performance Metrics and Benchmarks}

In order to assess the performance of systematic NC and explore its capability to progressively decode a source message, we will compare it with \textit{ordered uncoded} (OU) transmission~\cite{Jones} and straightforward NC. In OU transmission, the $K$ source packets are periodically repeated. The transmitted packet at time step $n=i+m K$ can be expressed as \mbox{$t_{i+m K} = s_i$} for \mbox{$i=1,\ldots,K$} and $m\geq0$. We note that transmission is \textit{uncoded} in the sense that transmitted packets are not linear combinations of the source packets. By contrast, the $n$-th transmitted packet in straightforward NC is given by \mbox{$t_n=\sum_{i=1}^{K}g_{n,i}\:s_{i}$} for $n>0$, implying that all transmitted packets are linear combinations of the source packets.

Probabilities $\mathrm{P}_{K,M}(N)$ and $\mathrm{P}_{K}(N)$ will be used to contrast the performance of systematic NC, straightforward NC and OU transmission. In order to create links between the two decoding probabilities, we introduce the following parameters:
\begin{itemize}
\item $\hat{P}$ is a predetermined target probability of packet recovery that a transmission scheme has to attain. Probabilities $\mathrm{P}_{K,M}(N_1)$ and $\mathrm{P}_{K}(N_2)$ can be set equal to $\hat{P}$ in order to determine the number of transmitted packets $N_1$ and $N_2$ that are required for the partial or full recovery of the source message, respectively.
\item $\hat{N}$ signifies the minimum number of transmitted packets required by the receiver to recover at least $M$ source packets with a probability of at least $\hat{P}$.
\item $\Delta N$ denotes the minimum number of additional packets that should be transmitted so that the receiver recovers the $K$ source packets with a probability of at least $\hat{P}$.
\end{itemize}

A performance comparison of the investigated schemes will be carried out in Section \ref{sec:results}. Prior to that, we discuss decoding algorithms for NC schemes and propose a decoding process that allows progressive decoding of source packets in the following section.


\section{Progressive Decoding}
\label{sec:prog_decoding}

If the objective of the decoding algorithm is the recovery of the $K$ source packets after the reception of at least $K$ transmitted packets, Gaussian Elimination (GE) could be used especially when the value of $K$ is small. The GE algorithm transforms the decoding matrix $\mathbf{G}$ into row-echelon form \cite{Epperson}. The rank of the transformed matrix, which is equal to the rank of the original decoding matrix, can be obtained by inspecting the number of non-zero rows within the echelon form. If the rank is $K$, that is, if $\mathbf{G}$ is a \textit{full-rank} matrix, the $K$ source packets can be successfully recovered.

GE and schemes based on Belief Propagation (BP) \cite{JBPGE} experience a large spike in computation when $K$ transmitted packets are received. On-the-Fly Gaussian Elimination (OFGE) \cite{Bioglio} manages to mitigate the decoding delay and computational complexity of GE by invoking an optimized triangulation process \textit{every time} a packet is received. The OFGE decoder spreads computation out over each packet arrival and the decoding matrix $\mathbf{G}$ is already in partial triangular form by the time the $K$-th transmitted packet is received.

Both GE and OFGE have been designed to perform \mbox{full-rank} decoding. As a result, if the rank of $\mathbf{G}$ is less than $K$, that is, if the decoding matrix is \mbox{\textit{rank-deficient}}, some source packets might still be decodable but GE or OFGE will not necessarily identify them. A modified version of OFGE, which we refer to as OFGE for Progressive Decoding \mbox{(OFGE-PD)}, was presented in \cite{Jones}. Similarly to OFGE, \mbox{OFGE-PD} also comprises a triangulation stage and a back-substitution stage. An additional stage, called the XORing phase, enables \mbox{OFGE-PD} to decode source packets from \mbox{rank-deficient} decoding matrices at the expense of increased computational complexity.

We revisited the original GE algorithm and we amalgamated the OFGE principle of initiating the decoding process whenever a packet is received. A sketch of the proposed algorithm, referred to as Gaussian Elimination for Progressive Decoding \mbox{(GE-PD)}, is presented in Algorithm \ref{GE-PD}. To facilitate the description of \mbox{GE-PD}, we introduced function $\texttt{Degree}$, which determines the number of non-zero elements in a row vector; function $\texttt{Diag}$, which generates a row vector containing the elements of the main diagonal of a matrix; function $\texttt{LeftmostOne}$, which returns the position of the first non-zero entry in a row vector; and function $\texttt{Swap}$, which swaps two rows in a matrix. The decoding matrix $\mathbf{G}$ is initially set equal to the $K\times K$ zero matrix. Recall that $\mathbf{G}[\,n\,]$ represents the $n$-th row of $\mathbf{G}$, while $\mathbf{G}[\,i\,][\,j\,]$ denotes the entry of $\mathbf{G}$ in the $i$-th row and $j$-th column (equivalent to $g_{i,j}$). We note that, depending on the adopted programming language, the code can be further optimized and the execution speed of \mbox{GE-PD} improved.

As line~\ref{alg.R} in Algorithm \ref{GE-PD} indicates, whenever a new coding vector $\mathbf{R}$ is received, it is updated so that any previously decoded source packets are not considered again in the decoding process. If the updated row-vector $\mathbf{R}$ still contains non-zero entries, it is appended to the bottom of the decoding matrix $\mathbf{G}$ (lines~\ref{alg.AppendRowStart}-\ref{alg.AppendRowStop}). Lines~\ref{alg.ArrangeStart}-\ref{alg.ArrangeStop} rearrange the rows of $\mathbf{G}$ in an effort to transform it into an upper triangular matrix. Lines~\ref{alg.TransStart}-\ref{alg.TransStop} aim to transform $\mathbf{G}$ into row-echelon form by ensuring that each non-zero element on the main diagonal of $\mathbf{G}$ is the only non-zero element in that column. Finally, function $\texttt{BackSubstitution}$ is called in line~\ref{alg.backsub} to establish which source packets are decodable. The efficiency and accuracy of \mbox{GE-PD} are investigated in the following section.

\begin{algorithm}[t]
\caption{Gaussian Elimination for Progressive Decoding}
\label{GE-PD}
\begin{algorithmic}[1]
\scriptsize
\State Receive new $1\times K$ coding vector $\mathbf{R}$
\State Set entries in $\mathbf{R}$ that correspond to decoded packets to 0\label{alg.R}
\If {$(\Degree(\mathbf{R}) > 0)$}\label{alg.AppendRowStart}
    \State $\mathbf{G}[\,K+1\,] \gets \mathbf{R}$\label{alg.AppendRowStop}
    \For{$i = 1$ to $K$}
		\State $one\_in\_diag \gets \True$\label{alg.ArrangeStart}
		\If {$(\mathbf{G}[\,i\,][\,i\,] = 0)$}
			\State $one\_in\_diag \gets \False,\;\;j \gets i+1$
			\Repeat
				\If {$\LeftmostOne(\mathbf{G}[\,j\,])=i$}
					\State $\Swap(\mathbf{G}[\,i\,], \mathbf{G}[\,j\,])$				
					\State $one\_in\_diag \gets \True$					
             	\EndIf
				\State $j \gets j+1$			
			\Until{$(\,j>K+1\,)\;\;\textbf{or}\;\;one\_in\_diag$}                   		
		\EndIf\label{alg.ArrangeStop}
		\If{$one\_in\_diag$}\label{alg.TransStart}
			\For {$j = 1$ to $(K+1)$}
				\If{$(\,j \neq i\,)\;\;\textbf{and}\;\;(\mathbf{G}[\,j\,][\,i\,]=1)$}
					\State $\mathbf{G}[\,j\,] \gets \mathbf{G}[\,j\,]\oplus\mathbf{G}[\,i\,]$
				\EndIf
			\EndFor
		\EndIf\label{alg.TransStop}
	\EndFor   
	\State $\mathbf{G} \gets $ BackSubstitution($\mathbf{G}$, $K$)\label{alg.backsub}
    \State $\mathbf{G} \gets $ Top $K$ rows of $\mathbf{G}$
\EndIf
\end{algorithmic}
\begin{algorithmic}[1]
\Algphase{\textbf{Function} BackSubstitution($\mathbf{G}$, $K$)}
\scriptsize
	\For {$i = K$ to $1$ step $-1$}
		\If {$\Degree(\mathbf{G}[\,i\,]) = 1$}
			\State $j = \LeftmostOne(\mathbf{G}[\,i\,])$
			\For {$k = 1$ to $K$}
				\SLIF {$k \neq i$}{$\mathbf{G}[\,k\,][\,j\,] \gets 0$}
			\EndFor
		\EndIf
	\EndFor
    \State \Return $\mathbf{G}$
\end{algorithmic}
\end{algorithm}

\section{Results and Discussion}
\label{sec:results}

This section compares the proposed \mbox{GE-PD} with \mbox{OFGE-PD}, OFGE and GE in terms of computational cost and capability of progressively recovering source packets. The decoding algorithm that achieves the best accuracy but requires the least computational time is identified. It is then used to obtain simulation results, which are compared to theoretical predictions in order to validate the derived analytical expressions for systematic NC. The performance of systematic NC is then contrasted to that of straightforward NC and OU transmission, and the suitability of each scheme for progressive packet recovery is discussed.

\subsection{Assessment of GE-PD}

Fig.~\ref{fig.time_comp} compares the \textit{computational cost} of the considered decoding schemes. Recall that \mbox{GE-PD} and \mbox{OFGE-PD} are modified versions of GE and OFGE, respectively, which have been adapted to recover source packets from rank-deficient decoding matrices, as described in Section \ref{sec:prog_decoding}. The computational cost has been expressed in terms of the time required for a decoder to recover the full sequence of $K$ source packets when straightforward NC is applied and channel conditions are perfect, i.e. $p=0$. The plotted results were obtained on a simulation platform equipped with an Intel Core i7-3770 processor and 8 GB of RAM. As expected \cite{Bioglio}, Fig.~\ref{fig.time_comp} shows that OFGE yields substantial computational savings over the conventional GE. However, the inclusion of progressive decoding capabilities in OFGE adds noticeable overhead to the decoding process. We observe that the computational cost of the resultant \mbox{OFGE-PD} increases rapidly for large values of $K$. On the other hand, GE-PD is not only more efficient than the original GE but also executes faster than OFGE.

Straightforward NC for $K=20$ source packets and perfect channel conditions were also assumed for the performance assessment of the four decoding schemes. Fig.~\ref{fig.dec_prob} depicts the probability of each scheme recovering at least half \mbox{($M=10$)} or all \mbox{($M=20$)} of the source packets when $N$ packets have been transmitted. As we see, OFGE is not optimized for recovering a fraction of the source message in contrast to \mbox{OFGE-PD}, which requires a smaller number of transmitted packets to recover half of the source message but at a higher computational cost. A fact worthy of attention is that the decoding accuracy of GE is matched by that of \mbox{GE-PD}, which exhibits a computational cost as low as that of OFGE. For this reason, the proposed \mbox{GE-PD} was the decoding algorithm of choice for the simulation of the considered NC-based schemes. 

\begin{figure}[t]
\centering
\includegraphics[width=0.95\columnwidth]{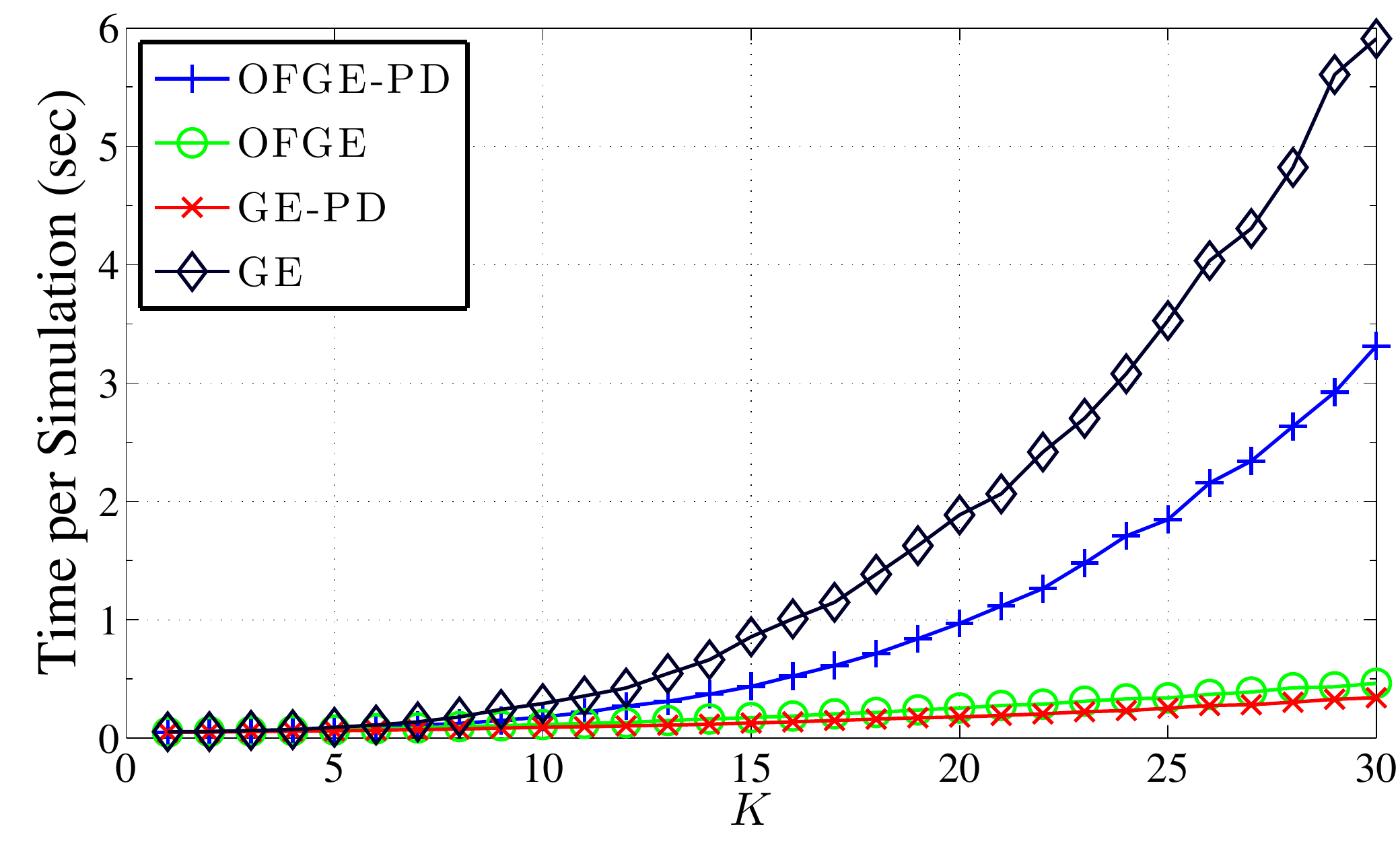}
\caption{Computational cost of the decoding schemes for different numbers of source packets $(K=1,\ldots,30)$.}
\label{fig.time_comp}
\end{figure}

\begin{figure}[t]
\centering
\includegraphics[width=0.95\columnwidth]{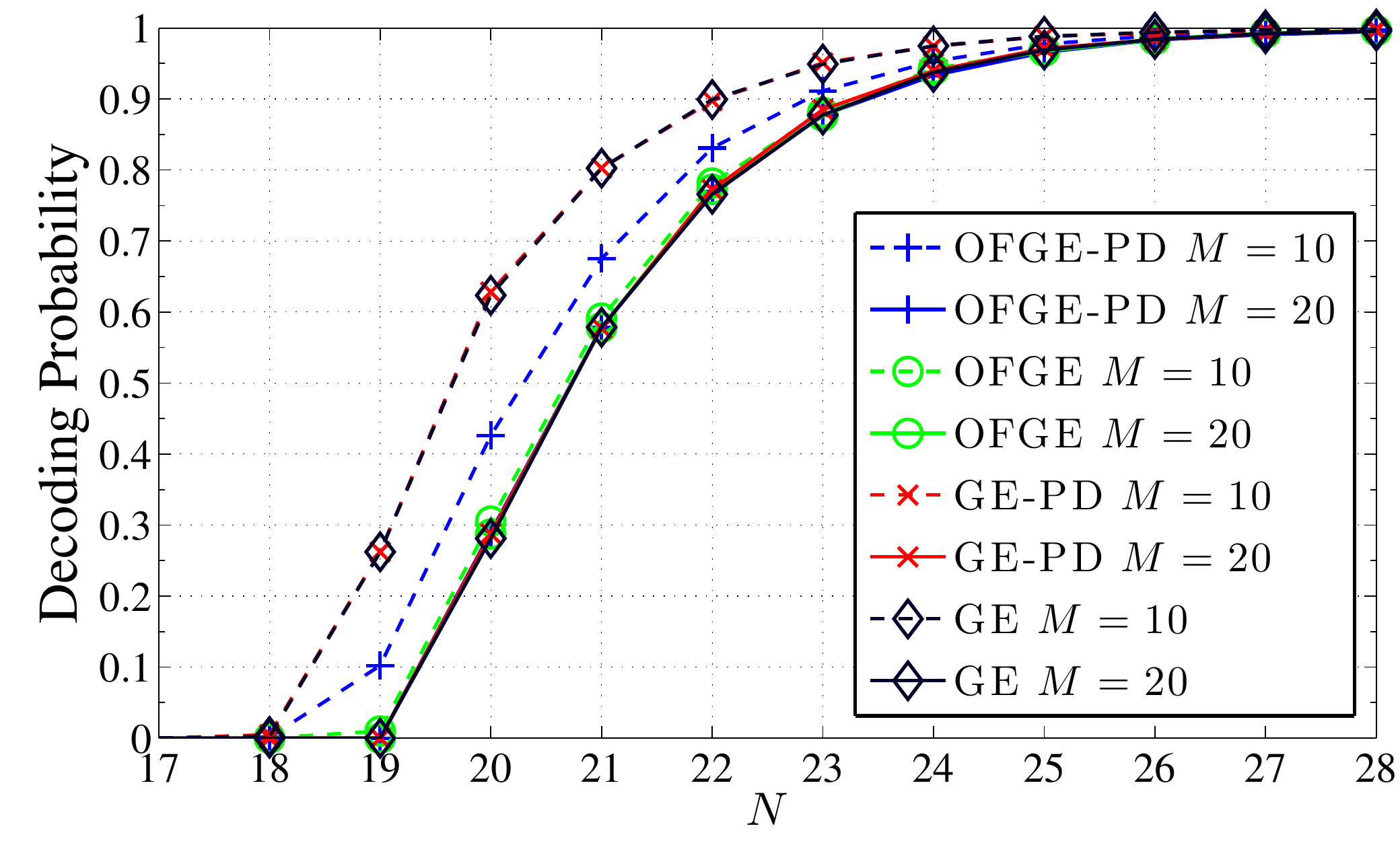}
\caption{Performance comparison of the decoding schemes for $K=20$.}
\label{fig.dec_prob}
\end{figure}

\subsection{Performance Validation of Systematic NC}

In order to validate the derived analytical expressions for the decoding probability of systematic NC, a comparison between theoretical and simulation results was carried out. We considered a source message comprising \mbox{$K=40$} packets, which are encoded using a systematic NC and transmitted over a packet erasure channel with $p=\{0.1, 0.15, 0.3\}$.

Fig.~\ref{fig.sysvalidM} shows that expression \eqref{eq.P_partial} for $\mathrm{P}_{K,M}(N)$ accurately predicts the probability of decoding at least half of the source message \mbox{($M=20$)}. Similarly, expression \eqref{eq.P_full} for $\mathrm{P}_{K}(N)$ matches the simulated results for decoding the entire source message \mbox{($M=40$)}, as reported in Fig.~\ref{fig.sysvalidK}. The excellent agreement between theory and simulation establishes the validity of the theoretical analysis. It also demonstrates that the proposed GE-PD is both efficient and accurate, considering that the number of decoded source packets matches the one predicted by the theoretical model.

\begin{figure}[t]
\vspace{-3.5mm}\centering
\subfloat[$M=20$]{\label{fig.sysvalidM}
\hspace{-1mm}\includegraphics[width=0.5\columnwidth]{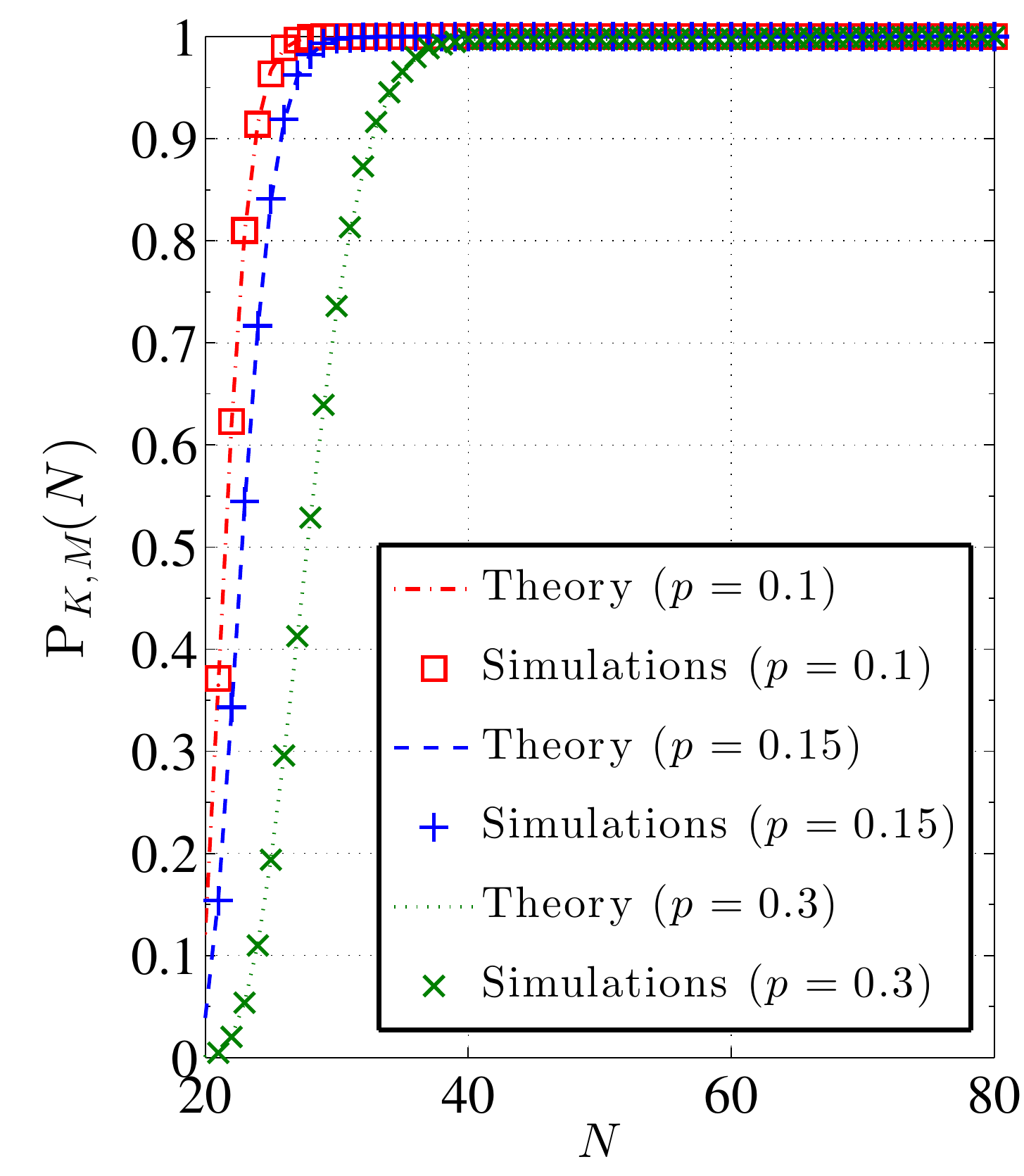}
}
\subfloat[$M=K=40$]{\label{fig.sysvalidK}
\hspace{-2mm}\includegraphics[width=0.5\columnwidth]{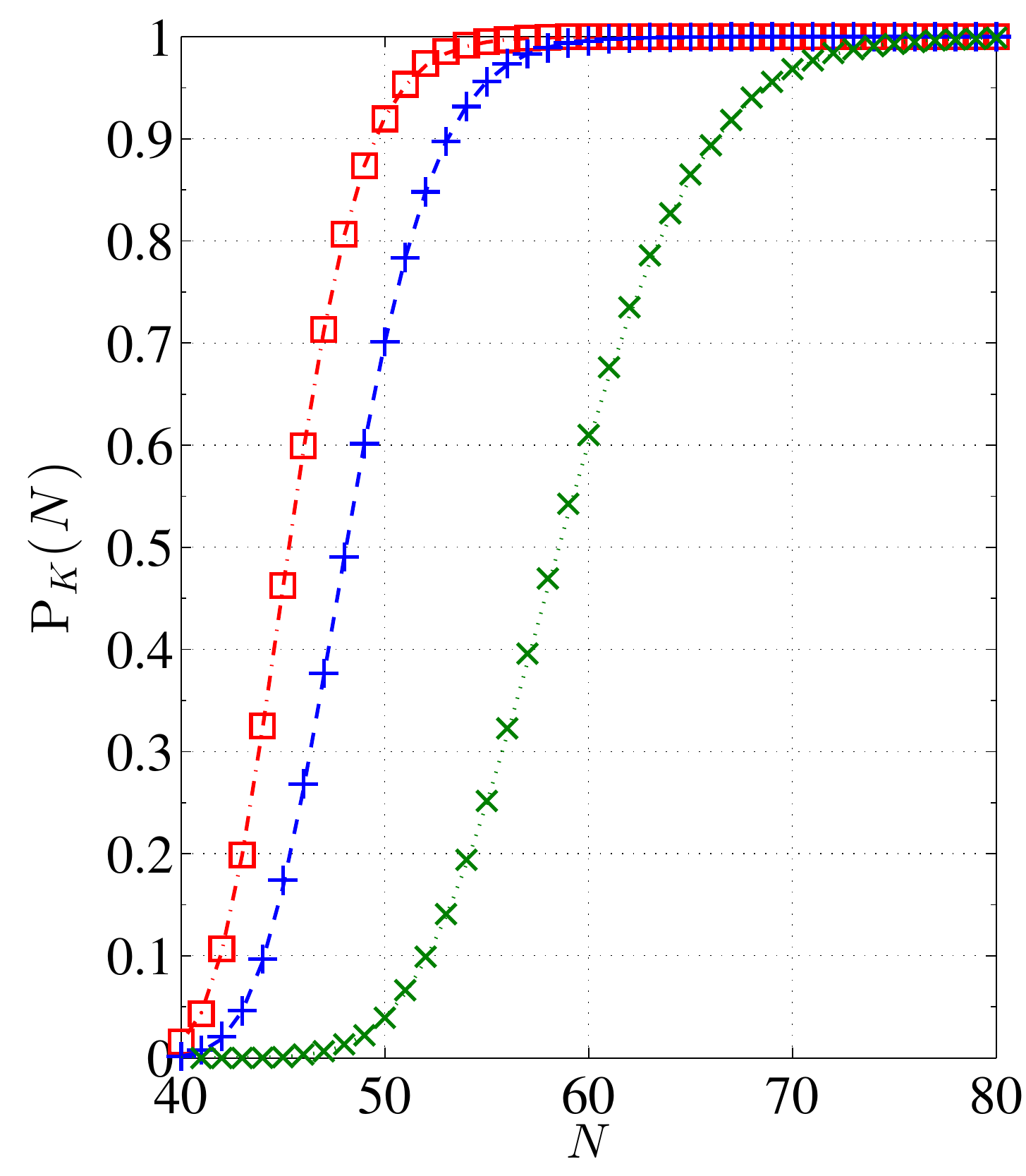}
}
\caption{Performance validation of systematic NC for $K=40$, different values of $p$ and (a) partial recovery ($M=20$) or (b) full recovery ($M=40$) of the source packets.}
\vspace{-2mm}
\label{fig.sysvalid}
\end{figure}

\subsection{Evaluation of Systematic NC for Progressive Decoding}

Fig.~\ref{fig.perfcomp} shows the probability that a receiver employing systematic NC recovers at least half \mbox{$(M=K/2)$} or all \mbox{$(M=K)$} of the source packets, when $N$ packets have been transmitted. The performance of systematic NC is contrasted with that of OU transmission and straightforward NC, referred to here as SF NC for brevity. Two scenarios have been considered; Fig.~\ref{fig.perfcomp_K20} depicts the performance of the three transmission schemes when $K=20$, while Fig.~\ref{fig.perfcomp_K40} presents plots for the case of $K=40$. In both scenarios, the packet erasure probability has been set to $p=0.1$.

We observe in Fig.~\ref{fig.perfcomp_K20} that OU transmission allows the recovery of at least half of the source message for a small value of $N$. However recovery of the whole source message requires a large number of transmitted packets. For example, for a target probability of \mbox{$\hat{P}=0.7$}, a system using OU transmission can retrieve \mbox{$M=10$} source packets if just \mbox{$\hat{N}=11$} packets are transmitted. On the other hand, recovery of all \mbox{$M=20$} source packets requires the transmission of at least 39 packets. In other words, \mbox{$\Delta N=39-11=28$} packets need to be transmitted, on average, to allow recovery of the whole source message, when half of the message has already been retrieved. As we see in Fig.~\ref{fig.perfcomp_K40}, a larger value of $K$ will markedly increase the value of $\Delta N$.

By contrast, SF NC incurs a significant delay in recovering at least half of the source message but only a few extra transmitted packets are required to obtain the entire message. We observe in Fig.~\ref{fig.perfcomp_K20} that if \mbox{$\hat{P}=0.7$} then \mbox{$\hat{N}=24$} packets are needed to reconstruct half of the message, while the transmission of only \mbox{$\Delta N=1$} additional packet is sufficient for the decoding of the entire message.

As is apparent from Fig.~\ref{fig.perfcomp_K20} and Fig.~\ref{fig.perfcomp_K40}, systematic NC combines the best performance characteristics of both OU transmission and SF NC. We observe that the value of $\hat{N}$ for recovering at least half of the source packets is as small as that of OU transmission, while the required number of transmitted packets for retrieving all of the source packets is smaller than or similar to that of SF FC. The latter observation confirms Proposition \ref{prop.sysNC}. Consequently, systematic NC is the most appropriate of the considered transmission schemes for progressive packet decoding, as it exhibits a high probability of either partially or fully decoding the source message.

\begin{figure}[t]
\centering
\subfloat[$K=20$]{\label{fig.perfcomp_K20}
\hspace{-1mm}\includegraphics[width=0.88\columnwidth]{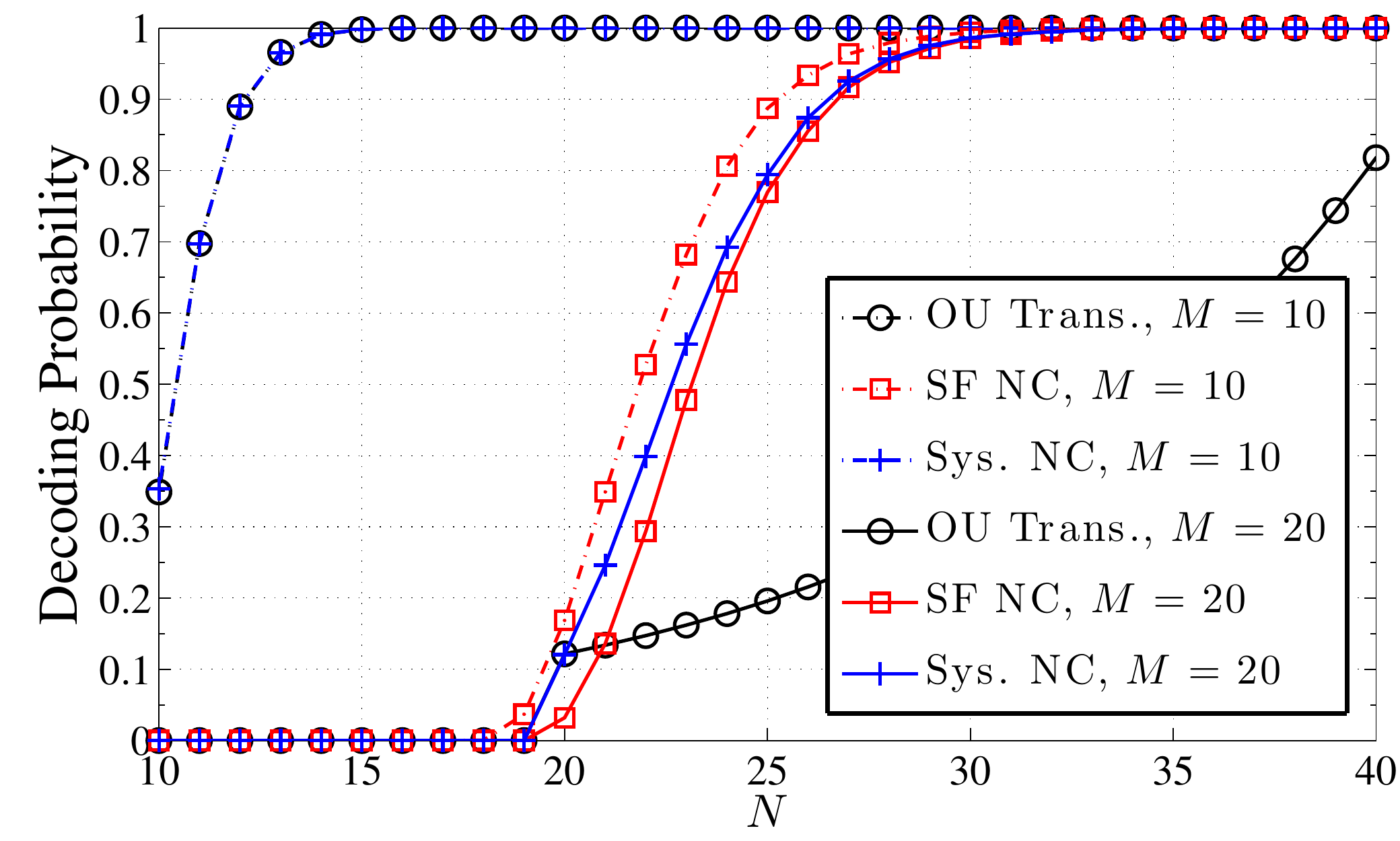}
}\\
\vspace{-2mm}
\subfloat[$K=40$]{\label{fig.perfcomp_K40}
\hspace{-2mm}\includegraphics[width=0.88\columnwidth]{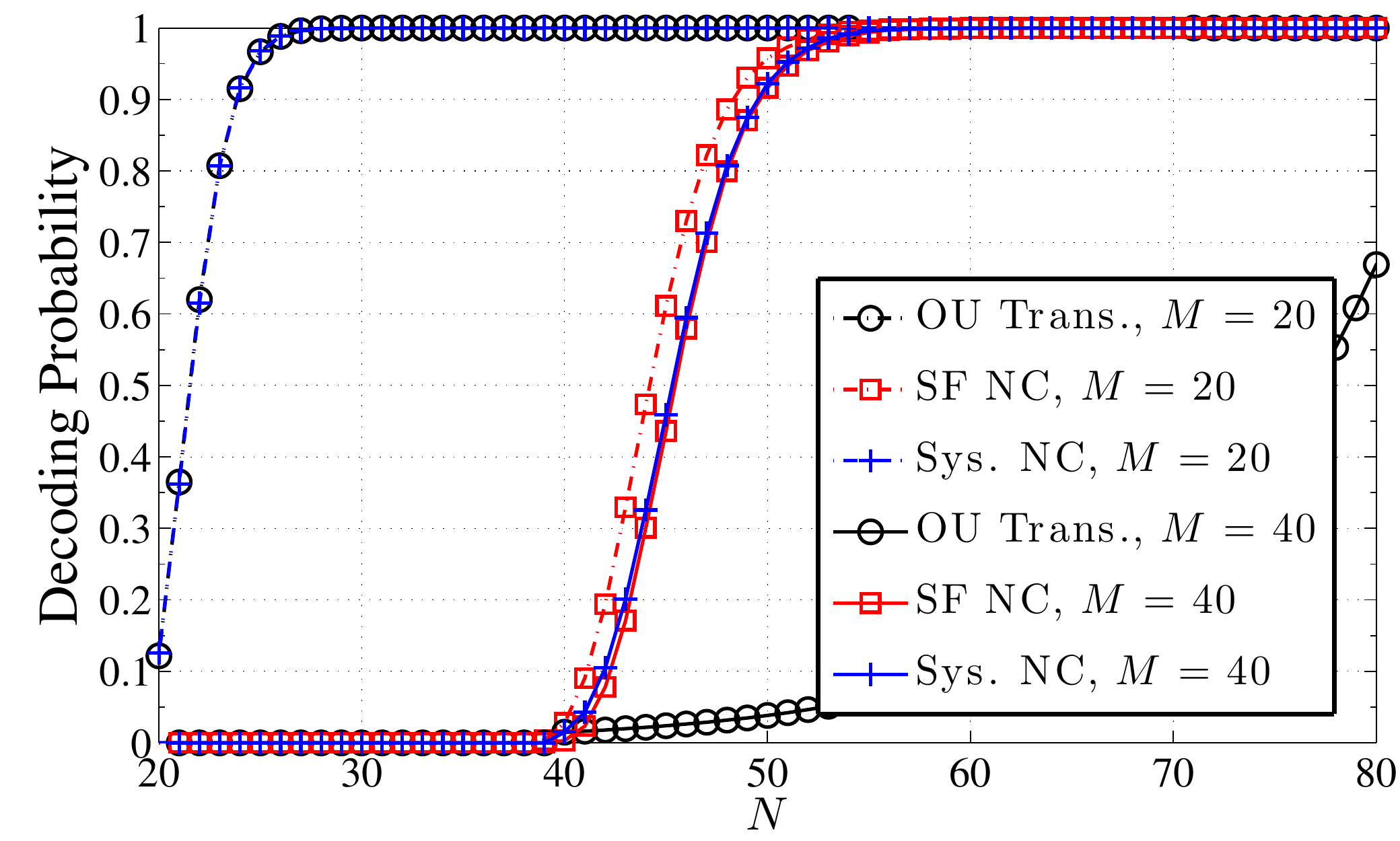}
}
\caption{Decoding probabilities as a function of $N$ for $p=0.1$ and (a) $K=20$ or (b) $K=40$.}
\vspace{-2mm}
\label{fig.perfcomp}
\end{figure}


\section{Conclusions}
\label{sec:conclusions}

In this paper, we considered systematic random linear network coding, obtained theoretical expressions that accurately describe its decoding probability and proved that systematic network codes exhibit a higher probability of decoding the entirety of a source message than straightforward network coding. We also proposed Gaussian elimination for Progressive Decoding (GE-FD), which aims to recover source packets as soon as one or more transmitted packets are successfully delivered to a receiver. We demonstrated that GE-PD performs similarly to the optimal theoretical decoder in terms of decoding probability and also exhibits low computational cost. Furthermore, we established that the decoding delay characteristics of systematic network coding for both partial and full recovery of source messages are notably better than those of straightforward network coding.


\section*{Acknowledgment}

This work was conducted as part of the R2D2 project, which is supported by EPSRC under Grant EP/L006251/1.

\bibliographystyle{IEEEtran}
\bibliography{ICC15bib}
\vspace{-4mm}
\end{document}